\newtheorem{theorem}{Theorem}
\DeclarePairedDelimiter{\abs}{\lvert}{\rvert}
\renewcommand{\(}{\left(}
\renewcommand{\)}{\right)}
\def\X{\mathcal{X}}
\def\Y{\mathcal{Y}}
\def\Z{\mathcal{Z}}
\def\W{\mathcal{W}}
\newcommand{\setft}[1]{\mathrm{#1}}
\newcommand{\density}[1]{\setft{D}\left(#1\right)}
\newcommand{\pos}[1]{\setft{Pos}\left(#1\right)}
\definecolor{White}{rgb}{1,1,1}
\definecolor{Black}{rgb}{0,0,0}
\definecolor{LightGray}{rgb}{.81,.81,.81}
\colorlet{ChannelColor}{LightGray}
\colorlet{ChannelTextColor}{Black}
\colorlet{ReadoutColor}{White}
\begin{document}

\title{Parallel repetition with a threshold in quantum interactive proofs}

\author{
  {\hspace{-0.2cm} Abel Molina}\\[2mm]
  {\it Institute for Quantum Computing and School of Computer Science}\\
  {\it University of Waterloo}\\
 }

\maketitle

\begin{abstract}
In this note, we show that $O(\log (1/\epsilon))$ rounds of parallel repetition with a threshold suffice to reduce completeness and soundness error to $\epsilon$ for single-prover quantum interactive proof systems. This improves on a previous $O(\log (1/\epsilon) \log \log (1/\epsilon))$ bound from Hornby (2018), while also simplifying its proof. A key element in our proof is a concentration bound from Impagliazzo and Kabanets (2010).
\end{abstract}

\section{Introduction}

When we have a probabilistic procedure that computes a binary output, one can often reduce the probability that the procedure errs by conducting parallel repetition with a threshold. The number of needed parallel instances scales as  $O( \log(1/\epsilon))$, with $\epsilon$ being the desired error. For example, this applies to those computations corresponding to the complexity classes \textbf{BPP}, \textbf{BQP}, and \textbf{IP}~\cite{arora2009computational}. Similar ideas can be made to work in the context of quantum interactive proof systems with perfect completeness \cite{kitaev2000parallelization}.

In the classical prover-verifier interactions corresponding to the class \textbf{IP}, one can use standard averaging arguments to show that it is optimal for a prover (who we shall call Bob) to play both deterministically and independently when his goal is to win at least $k$ out of $n$ parallel instances. By applying Chernoff bounds, this gives us a naive proof  of correctness for error reduction via parallel repetition with a threshold. 

In the case of quantum prover-verifier interactions, it is however not optimal in general for Bob to play independently when his goal is to win at least $k$ out of $n$ parallel instances, although it is optimal for the case where $k=n$  \cite{gutoski2010quantum}. In particular, a counterexample when $k=1$ and $n=2$  was provided in $\cite{molina2012hedging}$, with further examples in $\cite{arunachalam2018quantum, ganz2018quantum}$.

This means that in the context of computational complexity classes corresponding to quantum prover-verifier protocols with a single prover, there is no naive proof of correctness for error reduction via parallel repetition with a threshold. In particular, it means that it is not possible to trivially use Chernoff bounds in order to prove a decrease in the soundness error. 

For those quantum prover-verifier interactions with at least $3$ messages, one way to sidestep this complication as far as error reduction is concerned is to find an equivalent protocol with fewer or the same messages and perfect completeness \cite{kitaev2000parallelization}, and then perform parallel repetition where the prover is required to win all instances. For $2$ messages, one can make usage of a more complex error-reduction procedure \cite{jain2009two}, which considers several parallel instances and then divides them into subgroups for the purposes of determining a final outcome.

However, none of those results directly address the question of whether parallel repetition with a threshold is effective for the purpose of error reduction in quantum prover-verifier interactions with a single prover. This question was then addressed in \cite{hornby2018concentration}, where it was proved that for all such quantum prover-verifier protocols, parallel repetition with a threshold does, in fact, bring both soundness and completeness errors down to $0$ asymptotically. In particular, it is proved there that  $O(\log (1/\epsilon) \log \log (1/\epsilon)) $ parallel instances suffice in order to bring the completeness and soundness errors below $\epsilon$. This is proved by a reduction to the technique in \cite{jain2009two}.

We now prove here that $O(\log (1/\epsilon)$ number of parallel instances will suffice to bring the completeness and soundness errors below $\epsilon$. Furthermore, we are able to do this without an explicit reduction to any other setting. Our key technique will be the usage of a concentration bound of Impagliazo and Kabanets \cite{impagliazzo2010constructive}. The conditions needed for the application of the lemma correspond to the result in \cite{gutoski2010quantum} that looks at the situation where Bob aims to succeed in all parallel instances. This concentration bound has been previously used \cite{gavinsky2012quantum, pastawski2012unforgeable} to study attacks on certain prover-verifier interactions corresponding to quantum money protocols, while making usage of properties of the specific protocols rather than of the general result in~\cite{gutoski2010quantum}.  Note as well that the bound in \cite{impagliazzo2010constructive} can be seen as a rephrasing of a previous one in \cite{panconesi1997randomized}.

\section{Setting}

\label{sec:setting}

In the protocols that we study, verifier Alice and prover Bob exchange a series of messages, with Alice making a final binary decision. We assume without loss of generality that the first message is sent by Alice. Alice's actions are fixed by the protocol, while Bob is free to choose his actions. Formally, the protocol is represented by the following objects:

\begin{enumerate}

\item Two series of $r$ finite-dimensional Hilbert spaces $\X_1, \ldots, \X_r$ and $\Y_1, \ldots, \Y_r$, corresponding to the state spaces for the $r$ messages from Alice to Bob, and viceversa. 

\item  A series of $r$ finite-dimensional Hilbert spaces  $\Z_1, \ldots, \Z_r$, corresponding to Alice's internal memory after she sends each of her $r$ messages.

\item A density operator $\rho \in \density{\X_1 \otimes \Z_1}$, corresponding to Alice's first message to Bob.

\item A series of $r-1$ channels, which produce Alice's responses after each of the first $r-1$ messages from Bob. We label these channels as $\Psi_2, \ldots, \Psi_r$, with $\Psi_i$ mapping inputs in $\density{\Y_{i-1} \otimes \Z_{i-1}}$ to outputs in $\density{\X_{i}}$.

\item A binary POVM $\{P_0, P_1\} \subset \pos{\Y_r \otimes \Z_r}$, which determines the final outcome of the interaction. $0$ is identified as the \textit{losing} outcome for Bob, while $1$ is identified as the \textit{winning} outcome.
\end{enumerate}

\noindent A corresponding circuit diagram for the case where $r=3$ can be seen in Figure~\ref{fig:interaction}.

\begin{figure}
  \begin{center} \small
    \begin{tikzpicture}[scale=0.35, 
        turn/.style={draw, minimum height=14mm, minimum width=14mm,
          fill = ChannelColor, text=ChannelTextColor},
        emptyturn/.style={draw, densely dotted, minimum height=14mm, 
          minimum width=14mm, fill=black!4},
        measure/.style={draw, minimum width=7mm, minimum height=7mm,
          fill = ChannelColor},
        >=latex]
                
      \node (V0) at (-18,4) [turn] {$\rho$};      
      \node (V1) at (-8,4) [turn] {$\Psi_2$};
      \node (V2) at (2,4) [turn] {$\Psi_3$};
      \node (V3) at (12,4) [turn] {$\{P_0, P_1\}$};
      \node (M) at (16, 4)  {};

      \node (P0) at (-13,-1) [emptyturn] {$\Phi_1$};
      \node (P1) at (-3,-1) [emptyturn] {$\Phi_2$};
      \node (P2) at (7,-1) [emptyturn] {$\Phi_3$};
      
      \draw[->] ([yshift=4mm]V0.east) -- ([yshift=4mm]V1.west)
      node [above, midway] {$\Z_1$};
      
      \draw[->] ([yshift=4mm]V1.east) -- ([yshift=4mm]V2.west)
      node [above, midway] {$\Z_2$};
      
      \draw[->] ([yshift=4mm]V2.east) -- ([yshift=4mm]V3.west)
      node [above, midway] {$\Z_3$};

      \draw[->] ([yshift=-4mm]V0.east) .. controls +(right:20mm) and 
      +(left:20mm) .. ([yshift=4mm]P0.west) node [right, pos=0.4] {$\X_1$};
      
      \draw[->] ([yshift=-4mm]V1.east) .. controls +(right:20mm) and 
      +(left:20mm) .. ([yshift=4mm]P1.west) node [right, pos=0.4] {$\X_2$};
      
      \draw[->] ([yshift=-4mm]V2.east) .. controls +(right:20mm) and 
      +(left:20mm) .. ([yshift=4mm]P2.west) node [right, pos=0.4] {$\X_3$};
      
      \draw[->] ([yshift=4mm]P0.east) .. controls +(right:20mm) and 
      +(left:20mm) .. ([yshift=-4mm]V1.west) node [left, pos=0.6] {$\Y_1$};
      
      \draw[->] ([yshift=4mm]P1.east) .. controls +(right:20mm) and 
      +(left:20mm) .. ([yshift=-4mm]V2.west) node [left, pos=0.6] {$\Y_2$};
      
      \draw[->] ([yshift=4mm]P2.east) .. controls +(right:20mm) and 
      +(left:20mm) .. ([yshift=-4mm]V3.west) node [left, pos=0.6] {$\Y_3$};

      \draw[-] ([yshift=2mm]V3.east) .. controls +(right:20mm) .. ([yshift=2mm]M) node [above, pos=1] {output};

      \draw[-] ([yshift=-2mm]V3.east) .. controls +(right:20mm) .. ([yshift=-2mm]M) node [right, pos=1] {};

      \draw[->,densely dotted] 
      ([yshift=-4mm]P0.east) -- ([yshift=-4mm]P1.west)
      node [below, midway] {$\W_1$};
      
      \draw[->,densely dotted] 
      ([yshift=-4mm]P1.east) -- ([yshift=-4mm]P2.west)
      node [below, midway] {$\W_2$};
      
    \end{tikzpicture}
  \end{center}
  \caption{An example of a prover-verifier interaction, showing each of the objects enumerated in Section~\ref{sec:setting}. This example corresponds to the case where $r=3$. Note that Bob's channels and his intermediate memory state spaces $\W_1$ and $\W_2$ are not part of the protocol's definition. Instead, Bob is free to choose them in order to manipulate the final output bit.}
  \label{fig:interaction}
\end{figure}
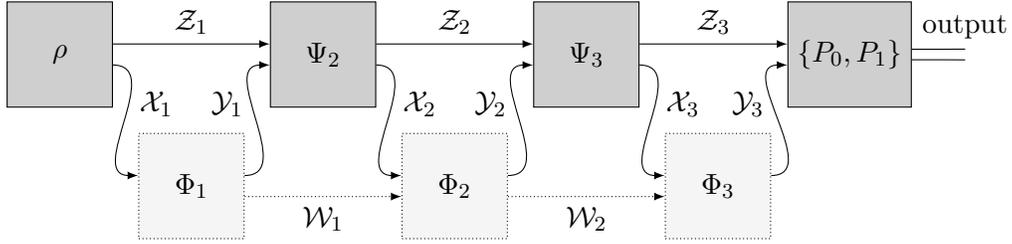

\begin{figure}
  \begin{center} \small
    \begin{tikzpicture}[scale=0.35, 
        turn/.style={draw, minimum height=14mm, minimum width=14mm,
          fill = ChannelColor, text=ChannelTextColor},
        emptyturn/.style={draw, densely dotted, minimum height=14mm, 
          minimum width=14mm, fill=black!4},
        measure/.style={draw, minimum width=7mm, minimum height=7mm,
          fill = ChannelColor},
        >=latex]

      \node (V0) at (-18,4) [turn] {$\rho$};      
      \node (V1) at (-8,4) [turn] {$\Psi_2$};
      \node (V2) at (2,4) [turn] {$\Psi_3$};
      \node (V3) at (12,4) [turn] {$\{P_0, P_1\}$};
      \node (M) at (16, 4)  {};

      \node (V02) at (-18,-6) [turn] {$\rho$};      
      \node (V12) at (-8,-6) [turn] {$\Psi_2$};
      \node (V22) at (2,-6) [turn] {$\Psi_3$};
      \node (V32) at (12,-6) [turn] {$\{P_0, P_1\}$};
      \node (M2) at (16, -6)  {};

      \node (P0) at (-13,-1) [emptyturn] {$\Phi_1$};
      \node (P1) at (-3,-1) [emptyturn] {$\Phi_2$};
      \node (P2) at (7,-1) [emptyturn] {$\Phi_3$};

      \draw[->] ([yshift=4mm]V0.east) -- ([yshift=4mm]V1.west)
      node [above, midway] {$\Z_1$};
      
      \draw[->] ([yshift=4mm]V1.east) -- ([yshift=4mm]V2.west)
      node [above, midway] {$\Z_2$};
      
      \draw[->] ([yshift=4mm]V2.east) -- ([yshift=4mm]V3.west)
      node [above, midway] {$\Z_3$};

      \draw[->] ([yshift=-4mm]V02.east) -- ([yshift=-4mm]V12.west)
      node [above, midway] {$\Z_1$};
      
      \draw[->] ([yshift=-4mm]V12.east) -- ([yshift=-4mm]V22.west)
      node [above, midway] {$\Z_2$};
      
      \draw[->] ([yshift=-4mm]V22.east) -- ([yshift=-4mm]V32.west)
      node [above, midway] {$\Z_3$};

      \draw[->] ([yshift=-4mm]V0.east) .. controls +(right:20mm) and 
      +(left:20mm) .. ([yshift=4mm]P0.west) node [right, pos=0.4] {$\X_1$};
      
      \draw[->] ([yshift=-4mm]V1.east) .. controls +(right:20mm) and 
      +(left:20mm) .. ([yshift=4mm]P1.west) node [right, pos=0.4] {$\X_2$};
      
      \draw[->] ([yshift=-4mm]V2.east) .. controls +(right:20mm) and 
      +(left:20mm) .. ([yshift=4mm]P2.west) node [right, pos=0.4] {$\X_3$};
      
      \draw[->] ([yshift=4mm]P0.east) .. controls +(right:20mm) and 
      +(left:20mm) .. ([yshift=-4mm]V1.west) node [left, pos=0.6] {$\Y_1$};
      
      \draw[->] ([yshift=4mm]P1.east) .. controls +(right:20mm) and 
      +(left:20mm) .. ([yshift=-4mm]V2.west) node [left, pos=0.6] {$\Y_2$};
      
      \draw[->] ([yshift=4mm]P2.east) .. controls +(right:20mm) and 
      +(left:20mm) .. ([yshift=-4mm]V3.west) node [left, pos=0.6] {$\Y_3$};

      \draw[-] ([yshift=2mm]V3.east) .. controls +(right:20mm) .. ([yshift=2mm]M) node [above, pos=1] {output};
      \draw[-] ([yshift=-2mm]V3.east) .. controls +(right:20mm) .. ([yshift=-2mm]M) node [right, pos=1] {};

      \draw[-] ([yshift=2mm]V32.east) .. controls +(right:20mm) .. ([yshift=2mm]M2) node [above, pos=1] {output};
      \draw[-] ([yshift=-2mm]V32.east) .. controls +(right:20mm) .. ([yshift=-2mm]M2) node [right, pos=1] {};

%--

% Arrows in 2nd instance

      \draw[->] ([yshift=+4mm]V02.east) .. controls +(right:20mm) and 
      +(left:20mm) .. ([yshift=-4mm]P0.west) node [right, pos=0.4] {$\X_1$};
      
      \draw[->] ([yshift=+4mm]V12.east) .. controls +(right:20mm) and 
      +(left:20mm) .. ([yshift=-4mm]P1.west) node [right, pos=0.4] {$\X_2$};
      
      \draw[->] ([yshift=+4mm]V22.east) .. controls +(right:20mm) and 
      +(left:20mm) .. ([yshift=-4mm]P2.west) node [right, pos=0.4] {$\X_3$};
      
      \draw[->] ([yshift=-4mm]P0.east) .. controls +(right:20mm) and 
      +(left:20mm) .. ([yshift=4mm]V12.west) node [left, pos=0.6] {$\Y_1$};
      
      \draw[->] ([yshift=-4mm]P1.east) .. controls +(right:20mm) and 
      +(left:20mm) .. ([yshift=4mm]V22.west) node [left, pos=0.6] {$\Y_2$};
      
      \draw[->] ([yshift=-4mm]P2.east) .. controls +(right:20mm) and 
      +(left:20mm) .. ([yshift=4mm]V32.west) node [left, pos=0.6] {$\Y_3$};

% --
      
      \draw[->,densely dotted] 
      ([yshift=0mm]P0.east) -- ([yshift=0mm]P1.west)
      node [below, midway] {$\W_1$};
      
      \draw[->,densely dotted] 
      ([yshift=0mm]P1.east) -- ([yshift=0mm]P2.west)
      node [below, midway] {$\W_2$};
      
    \end{tikzpicture}
  \end{center}
  \caption{Two parallel instances of the prover-verifier interaction from Figure~\ref{fig:interaction}.. We can see that Bob is allowed to arbitrarily entangle his actions between both instances, while Alice acts identically and independently.}
  \label{fig:parallelInteractions}
\end{figure}
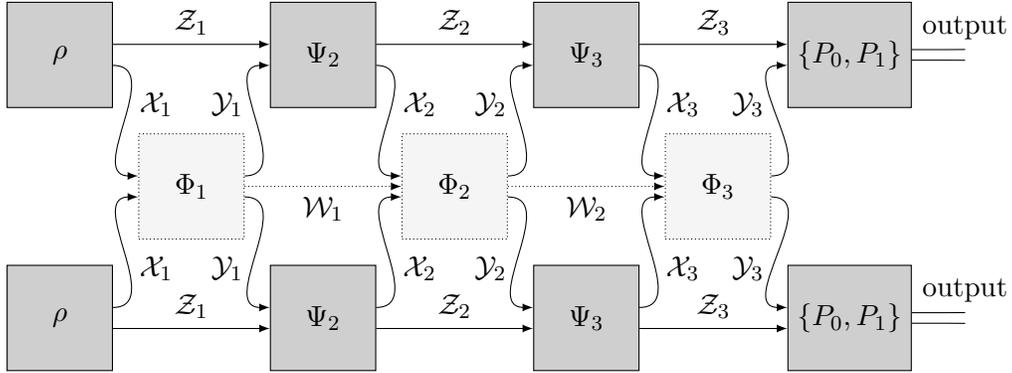

In a parallel repetition context, we have $n$ parallel instances of such a protocol. The verifier Alice will be acting identically and independently between these instances, while the prover Bob can entangle his answers between the $n$ instances. This is depicted for the case of two parallel instances in Figure~\ref{fig:parallelInteractions}.

As discussed earlier, the work in \cite{gutoski2010quantum} determines that for Bob, correlating his answers does not help when he intends to win $n$ out of $n$ parallel instances. More formally, we have the following:
\begin{theorem} [\cite{gutoski2010quantum}, Theorem~4.9]
\label{thm:theoremFromGusThesis}
Let the optimal chance for Bob of achieving the winning outcome in a prover-verifier interaction be equal to $p$.  Consider $n$ parallel instances. Then, Bob's optimal chance of achieving the winning outcome in all instances is equal to~$p^n$.
\end{theorem}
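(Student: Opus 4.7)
The $p^n$ lower bound is immediate: if Bob runs $n$ independent copies of his optimal single-instance strategy with no entanglement between instances, the outcomes are mutually independent $\mathrm{Bernoulli}(p)$ random variables and the probability that all $n$ are wins is exactly $p^n$. The substantive content of the theorem is therefore the matching upper bound, which asserts that entanglement across instances cannot help Bob win every instance.

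To prove this upper bound, my plan is to work within the strategy / co-strategy formalism of Gutoski and Watrous. In this formalism, a prover strategy for the $r$-round protocol is encoded by a positive semidefinite operator $B \in \pos{\X_1 \otimes \Y_1 \otimes \cdots \otimes \X_r \otimes \Y_r}$ subject to a nested sequence of partial-trace constraints (each $\operatorname{Tr}_{\Y_i} B_i = B_{i-1} \otimes I_{\X_i}$, with $B_0 = 1$). Dually, Alice's first message $\rho$, her response channels $\Psi_2, \ldots, \Psi_r$, and the winning measurement operator $P_1$ together define a positive semidefinite co-strategy $V$ on the same tensor product space. The winning probability of $B$ against the fixed verifier equals the Hilbert--Schmidt inner product $\langle V, B\rangle$, so
\[
p \;=\; \max_{B} \langle V, B\rangle
\]
is the optimum of a semidefinite program. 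In the $n$-parallel setting, after reordering tensor factors Alice's combined co-strategy is exactly $V^{\otimes n}$, whereas Bob's legal $n$-instance strategies $\tilde{B}$ form a convex cone that strictly contains products of single-instance strategies. The theorem thus reduces to the multiplicativity identity $\max_{\tilde{B}} \langle V^{\otimes n}, \tilde{B}\rangle = p^n$.

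I would prove this multiplicativity by induction on the number of rounds $r$. For the inductive step, given any optimal $n$-instance strategy $\tilde{B}_r$, I would use the final partial-trace constraint $\operatorname{Tr}_{\Y_r^{\otimes n}} \tilde{B}_r = \tilde{B}_{r-1} \otimes I$ together with positivity of $V^{\otimes n}$ to bound $\langle V^{\otimes n}, \tilde{B}_r\rangle$ by the optimum value over $(r-1)$-round $n$-instance strategies of a modified co-strategy that is itself a tensor power of a single-instance $(r-1)$-round co-strategy, and then apply the inductive hypothesis. An equivalent route is through SDP duality: tensoring an optimal dual feasible solution for the single-instance SDP with itself yields a dual feasible solution for the $n$-instance SDP with value $p^n$. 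The hard part is precisely this multiplicativity step, because the maximum of a linear functional over a convex cone is not generally multiplicative under tensor products; the argument must exploit both the sequential nesting of the strategy cone's partial-trace constraints and the positivity of $V$ inherited from the complete positivity of the channels $\Psi_i$.
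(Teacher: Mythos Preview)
The paper does not give its own proof of this theorem: it is stated as a citation of Theorem~4.9 in Gutoski's thesis, with the remark that an alternative proof appears in Vidick's lecture notes and that the statement is a special case of product rules studied by Mittal--Szegedy and Lee--Mittal. There is therefore nothing in the paper to compare your proposal against at the level of argument; the paper treats this result as a black box and then combines it with the Impagliazzo--Kabanets concentration bound to obtain Theorem~\ref{thm:boundToHedging}.

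That said, your outline is in the spirit of the cited source. The strategy / co-strategy SDP formalism you describe is precisely the framework in which Gutoski proves the result, and the cleaner of your two suggested routes---tensoring an optimal dual solution to obtain a dual-feasible witness of value $p^n$---is essentially how the multiplicativity is established there (equivalently, via multiplicativity of the strategy norm). Your inductive sketch on the number of rounds is vaguer: the sentence ``use the final partial-trace constraint $\operatorname{Tr}_{\Y_r^{\otimes n}}\tilde B_r=\tilde B_{r-1}\otimes I$ together with positivity of $V^{\otimes n}$ to bound $\langle V^{\otimes n},\tilde B_r\rangle$ by the optimum over $(r-1)$-round strategies of a modified co-strategy that is itself a tensor power'' hides the actual work, since one must show that the reduced object on Alice's side remains a \emph{tensor power} of a valid $(r-1)$-round co-strategy, and this does not drop out of positivity and a single partial trace alone. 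If you pursue this direction, the dual/norm argument is the one to write out.
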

\noindent An alternative proof of this result is provided in Chapter~4 of \cite{vidick2016quantum}, and it can also be seen as a specific case of the product properties discussed in \cite{mittal2007product, lee2008product}. 

We now formally discuss interactive proof systems. In this setting, we are concerned with prover-verifier interactions where the actions of the verifier are parametrized by a string $s$, which belongs to one of two sets $L_{yes}$ and $L_{no}$. For each value of $s$, there will be an associated optimal probability $p(s)$ of Bob obtaining the winning outcome. For some constants $0 \leq a < 1$ and $a < b \leq 1$, it is a property of the proof system that if $s \in L_{no}$, then $p(s) \leq a$, while if $s \in L_{yes}$, then $p(s) \geq b$. The soundness and completeness errors are then defined as $a$ and $1-b$, respectively.  

Given such a proof system, one can seek to find another one where the soundness and completeness errors are reduced to $\epsilon$. That is to say, the new proof system corresponds to the same tuple $(L_{yes}, L_{no}$), and we have that $a \leq \epsilon$ and $b \geq 1 - \epsilon$. We will discuss in Section~\ref{sec:results} how to make usage of parallel repetition with a threshold in order to perform this task. Our main result will follow from a threshold theorem of Impagliazzo and Kabanets, together with Theorem~\ref{thm:theoremFromGusThesis}. In particular, the threshold theorem that we use is as follows:

\begin{theorem} [\cite{impagliazzo2010constructive}, Theorem 1] Let $X_1$, \ldots, $X_n$ be Boolean random variables. Assume that for some $0 \leq \delta \leq 1$ and any subset $S \subseteq \{1, \ldots, n\}$, it holds that  
\label{thm:ik2010}
\begin{equation}
\Pr\( \bigwedge_{s \in S} X_s = 1\) \leq \delta^{\abs{S}}. 
\end{equation}
Then, for any value of $\gamma$ such that $\delta \leq \gamma \leq 1$, it holds that 
\begin{equation}
\Pr\(\sum_{i=1}^n X_i  \geq \gamma n\) \leq e^{-n D(\gamma||\delta)} \leq e^{-2n (\gamma - \delta)^2} .
\end{equation}
where  $D(\gamma||\delta)$ denotes the binary relative entropy.
\end{theorem}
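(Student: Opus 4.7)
The plan is to prove Theorem~\ref{thm:ik2010} via a Chernoff-style argument adapted to the subset-product hypothesis, using an auxiliary generating function in place of the usual exponential moment generating function. The first step is to observe that for any real $t \geq 0$, the hypothesis lets us control the expectation of a specific polynomial in the $X_i$'s. Namely, expanding the product yields
\begin{equation}
E\!\left[\prod_{i=1}^n (1 + t X_i)\right] = \sum_{S \subseteq \{1,\ldots,n\}} t^{\abs{S}} \, E\!\left[\prod_{s \in S} X_s\right] \leq \sum_{S} t^{\abs{S}} \delta^{\abs{S}} = (1 + t\delta)^n,
\end{equation}
where we used that $E[\prod_{s \in S} X_s] = \Pr(\bigwedge_{s \in S} X_s = 1) \leq \delta^{\abs{S}}$ since the $X_i$'s are $\{0,1\}$-valued.

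The second step is to notice that because each $X_i \in \{0,1\}$, the same product collapses to $\prod_i (1 + t X_i) = (1 + t)^{\sum_i X_i}$. Combining with the previous bound and applying Markov's inequality to the nonnegative random variable $(1+t)^{\sum_i X_i}$ gives
\begin{equation}
\Pr\!\left(\sum_{i=1}^n X_i \geq \gamma n\right) = \Pr\!\left((1+t)^{\sum_i X_i} \geq (1+t)^{\gamma n}\right) \leq \frac{(1 + t\delta)^n}{(1+t)^{\gamma n}}.
\end{equation}
The third step is to optimize over $t \geq 0$. Setting the derivative of $n \ln(1+t\delta) - \gamma n \ln(1+t)$ to zero produces $t^* = (\gamma - \delta)/(\delta(1-\gamma))$, which is nonnegative exactly because $\gamma \geq \delta$. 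Substituting and simplifying yields the numerator $(1-\delta)/(1-\gamma)$ and the base $\gamma(1-\delta)/(\delta(1-\gamma))$ for the denominator, and after collecting logarithms one recognizes the exponent as $-n[\gamma \ln(\gamma/\delta) + (1-\gamma)\ln((1-\gamma)/(1-\delta))] = -n D(\gamma\|\delta)$, giving the first inequality.

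The fourth and final step is the second inequality $D(\gamma\|\delta) \geq 2(\gamma - \delta)^2$, which is Pinsker's inequality for two-point distributions; I would simply cite it or recover it by noting that $f(\gamma) = D(\gamma \| \delta) - 2(\gamma-\delta)^2$ satisfies $f(\delta) = f'(\delta) = 0$ and $f''(\gamma) = \gamma^{-1}(1-\gamma)^{-1} - 4 \geq 0$ on $[0,1]$. I expect the only place that genuinely requires care is the choice of the generating function in the first step: the standard exponential trick $e^{t \sum X_i}$ does not factorize cleanly against the subset-product hypothesis, whereas the multilinear ansatz $\prod_i(1 + t X_i)$ is tailored to it precisely because its expansion enumerates subsets. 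Once this substitution is in place, everything else is routine calculus and a single application of Pinsker.
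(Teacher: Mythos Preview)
Your proof is correct and is essentially the argument of Impagliazzo and Kabanets themselves. Note, however, that the paper does \emph{not} prove this theorem at all: it merely quotes it as \cite{impagliazzo2010constructive}, Theorem~1, and uses it as a black box to derive Theorem~\ref{thm:boundToHedging}. So there is no ``paper's own proof'' to compare against; you have supplied the missing proof, and it matches the original reference. The key idea you flagged---using the multilinear generating function $\prod_i(1+tX_i)$ rather than $e^{t\sum_i X_i}$ so that the expansion enumerates subsets and meets the hypothesis directly---is exactly the point of the Impagliazzo--Kabanets argument (and of the earlier Panconesi--Srinivasan bound the paper also mentions).
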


\section{Results}

\label{sec:results}

Combining Theorem~\ref{thm:theoremFromGusThesis} with Theorem~\ref{thm:ik2010}, we derive the following bound to quantum hedging:

\begin{theorem}
\label{thm:boundToHedging}
Consider $n$ parallel instances of an arbitrary quantum prover-verifier interaction, as defined in Section~\ref{sec:setting}. Let the optimal chance of Bob obtaining the winning outcome be equal to the constant $p$. Consider any value $k \in \mathbb{N}$ such that $k = n(p + \delta)$, with $\delta > 0$ a positive constant. Then, Bob's optimal chance of winning at least $k$ out of the $n$ parallel instances is upper-bounded by $e^{-2n\delta^2}$.
\end{theorem}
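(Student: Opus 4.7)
The plan is to apply Theorem~\ref{thm:ik2010} to the indicator variables for Bob winning each instance, using Theorem~\ref{thm:theoremFromGusThesis} to verify the required monomial bound on conjunctions. Fix an arbitrary strategy for Bob in the $n$-parallel interaction, and let $X_1, \ldots, X_n$ be the Boolean random variables with $X_i = 1$ exactly when Bob wins the $i$-th instance. These are jointly distributed according to the statistics of Alice's $n$ independent measurements on the final state produced by Bob's strategy.

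The key hypothesis to establish is that, for every $S \subseteq \{1, \ldots, n\}$,
\begin{equation}
\Pr\(\bigwedge_{s \in S} X_s = 1\) \leq p^{\abs{S}}.
\end{equation}
This is the step I expect to require the most care, and it is where the quantum content really enters. The argument I would give is a simulation reduction: since Alice's specification --- her initial state $\rho$ and response channels $\Psi_2, \ldots, \Psi_r$ --- is part of the public protocol, Bob can convert his $n$-parallel strategy into an $\abs{S}$-parallel strategy by internally simulating the $n - \abs{S}$ copies of Alice whose indices lie outside $S$, exchanging messages with his own workspace in place of those real Alices. The interaction with the $\abs{S}$ real copies of Alice proceeds exactly as before. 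Marginalizing out the simulated measurements --- which at the POVM level amounts to replacing each unused $\{P_0, P_1\}$ by $P_0 + P_1 = I$ --- shows that the probability that this new strategy wins all $\abs{S}$ real instances equals the left-hand side above. Theorem~\ref{thm:theoremFromGusThesis} applied to the $\abs{S}$-parallel game then caps that probability by $p^{\abs{S}}$.

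Once this hypothesis is in hand, the remainder is a direct substitution into Theorem~\ref{thm:ik2010} with parameters $\delta_{\mathrm{IK}} = p$ and $\gamma = p + \delta$. The feasibility conditions $\delta_{\mathrm{IK}} \leq \gamma \leq 1$ hold because $\delta > 0$ and $k = n(p+\delta) \leq n$. The theorem then delivers $\Pr\(\sum_{i=1}^n X_i \geq k\) \leq e^{-2n(\gamma - \delta_{\mathrm{IK}})^2} = e^{-2n\delta^2}$, and since the initial strategy was arbitrary, the bound transfers to the strategy optimizing Bob's probability of winning at least $k$ of the $n$ instances, yielding the stated conclusion.
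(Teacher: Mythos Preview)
Your proposal is correct and follows essentially the same route as the paper: verify the monomial hypothesis $\Pr\(\bigwedge_{s\in S} X_s = 1\) \le p^{|S|}$ via the simulation reduction to Theorem~\ref{thm:theoremFromGusThesis}, then plug into Theorem~\ref{thm:ik2010}. You are slightly more explicit than the paper in fixing an arbitrary strategy and then passing to the optimum, and in spelling out the marginalization of unused measurements and the feasibility check $p \le p+\delta \le 1$, but the substance is identical.
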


\begin{proof}
Let $X_1, \ldots, X_n$ be Boolean random variables associated with the respective outcomes of the $n$ parallel instances. It must now hold that for any subset $S \subseteq \{1, \ldots, n\}$,
\begin{equation}
\Pr\( \bigwedge_{s \in S} X_s = 1\) \leq p^{\abs{S}}.  \label{eq:boundingSum}
\end{equation}

\noindent The reason why this must be the case is because of Theorem \ref{thm:theoremFromGusThesis}. In particular, let us imagine $\abs{S}$ parallel instances of the protocol under consideration. Then, Bob can obtain a probability of winning all of these $\abs{S}$ instances equal to $\Pr\( \bigwedge_{s \in S} X_s = 1\)$. He can do that by playing as if there were $n$ parallel instances, while simulating the actions of Alice for those of the $n$ instances that do not correspond to elements of~$S$. Theorem~\ref{thm:theoremFromGusThesis} then establishes that $\Pr\( \bigwedge_{s \in S} X_s = 1\)$ must be upper-bounded by $p^{\abs{S}}$.

By Theorem \ref{thm:ik2010}, it is a consequence of Equation \eqref{eq:boundingSum} that 
\begin{equation}
\Pr\(\sum_{i=1}^n X_i  \geq k\) \leq e^{-2n (p+\delta - p)^2}  = e^{-2n\delta^2}.  \label{eq:directConsequenceIK}
\end{equation}
\end{proof}

Theorem~\ref{thm:boundToHedging} then gives us a simple proof of correctness for error reduction via parallel repetition with a threshold. In particular, let us consider an arbitrary quantum prover-verifier proof system and its corresponding values of $a$ and $b$. Then, we consider $n$ parallel instances and ask the prover to win in at least $n\frac{a+b}{2}$ instances (or equivalently, in at least $\lceil n \frac{a+b}{2} \rceil$ instances).

In order to lower the completeness and soundness errors of this procedure down to $\epsilon$, it will be enough to have $O(\log(1/\epsilon))$ parallel instances. For the completeness error, this follows from standard Chernoff bounds. For the soundness error, this follows from applying Theorem \ref{thm:boundToHedging}, with $p=a$ and $\delta=\frac{b-a}{2}$.

As previously discussed, this is of special interest for the case of protocols with two messages, where there is no known ability to switch to an equivalent protocol with the same or fewer messages and with perfect completeness.

One can alternatively choose to model the gap between $a$ and $b$ not as a constant but as an inverse polynomial in $n$. Then, when we invoke Theorem~\ref{thm:boundToHedging} we have $\delta$ be an inverse polynomial in $n$, and the resulting bound on the number of parallel instances is $O\((1/\epsilon) \frac{1}{\delta^2}\) = O((1/\epsilon) poly(n))$.

\section*{Acknowledgements}
This work was performed with funding from Canada’s NSERC and a University of Waterloo President’s Graduate Scholarship. Thanks are due to Eric Blais for suggesting the possible relevance of \cite{impagliazzo2010constructive}, and to John Watrous for encouraging the writing of this note and providing valuable feedback.

\bibliographystyle{alpha}
\bibliography{refs}

\end{document}